\newcommand{\N}{\mbox{N}}
\newcommand{\Be}{\mbox{Be}}
\newcommand{\dd}{\mbox{d}}
\newcommand{\IG}{\mbox{IG}}
\newcommand{\E}{\mbox{E}}
\newcommand{\bbeta}{\boldsymbol{\beta}}
\newcommand{\by}{\mathbf{y}}
\newcommand{\p}{p}
\newcommand{\kummer}{\phantom{}_1 F_1}
\newcommand{\kummertwo}{\phantom{}_2 F_1}
\newtheorem{theorem}{Theorem}
\newtheorem{proposition}[theorem]{Proposition}
\begin{document}

%% These dates are usually set by the production team
%\revised{November 2009}

%% Here are the title, author names and addresses
\title{On the half-Cauchy prior for a global scale parameter}

\author{Nicholas G.~Polson\footnote{nicholas.polson@chicagobooth.edu}  \\
University of Chicago \\
\\
James G.~Scott\footnote{james.scott@mccombs.utexas.edu} \\
The University of Texas at Austin}

\date{First draft: June 2010\\
This draft: September 2011}

\maketitle

\begin{abstract}
This paper argues that the half-Cauchy distribution should replace the inverse-Gamma distribution as a default prior for a top-level scale parameter in Bayesian hierarchical models, at least for cases where a proper prior is necessary.  Our arguments involve a blend of Bayesian and frequentist reasoning, and are intended to complement the original case made by \citet{gelman:2006} in support of the folded-$t$ family of priors.  First, we generalize the half-Cauchy prior to the wider class of hypergeometric inverted-beta priors.  We derive expressions for posterior moments and marginal densities when these priors are used for a top-level normal variance in a Bayesian hierarchical model.  We go on to prove a proposition that, together with the results for moments and marginals, allows us to characterize the frequentist risk of the Bayes estimators under all global-shrinkage priors in the class.  These theoretical results, in turn, allow us to study the frequentist properties of the half-Cauchy prior versus a wide class of alternatives.  The half-Cauchy occupies a sensible ``middle ground'' within this class: it performs very well near the origin, but does not lead to drastic compromises in other parts of the parameter space.  This provides an alternative, classical justification for the repeated, routine use of this prior.  We also consider situations where the underlying mean vector is sparse, where we argue that the usual conjugate choice of an inverse-gamma prior is particularly inappropriate, and can lead to highly distorted posterior inferences.  Finally, we briefly summarize some open issues in the specification of default priors for scale terms in hierarchical models.

\vspace{1pc}

\noindent Keywords: hierarchical models; normal scale mixtures; shrinkage

\vspace{1pc}

\noindent Acknowledgements: the authors would like to thank an anonymous referee, the editor, and the associate editor, whose many useful suggestions have helped us to improve the manuscript.
\end{abstract}

\section{Introduction}

Consider a normal hierarchical model where, for $i = 1, \ldots, p$,
\begin{eqnarray*}
(y_{i} \mid \beta_i, \sigma^2) &\sim& \N(\beta_i, \sigma^2) \\
(\beta_i \mid \lambda^2, \sigma^2) &\sim& \N(0, \lambda^2 \sigma^2) \\
\lambda^2 &\sim& g(\lambda^2) \, .
\end{eqnarray*}
This prototype case embodies a very general problem in Bayesian inference: how to choose default priors for top-level variances (here $\lambda^2$ and $\sigma^2$) in a hierarchical model.

The routine use of Jeffreys' prior for the error variance, $ p( \sigma^2 ) \propto \sigma^{-2} $, poses no practical issues.
This is not the case for $ p( \lambda^2 ) $, however, as the improper prior $ p( \lambda^2 ) \propto \lambda^{-2} $ leads
to an improper posterior. This can be seen from the marginal likelihood:
$$
p( y \mid \lambda^2 ) \propto \prod_{i=1}^p ( 1 + \lambda^2 )^{- \frac{1}{2}} \exp \left ( - \frac{1}{2} \sum_{i=1}^p
 \frac{ y_i^2 }{1 + \lambda^2 } \right ) \, ,
$$
where we have taken $ \sigma^2 = 1 $ for convenience.  This is positive at $ \lambda^2 = 0$; therefore, whenever the prior $p(\lambda^{2})$ fails to be integrable at the origin, so too will the posterior.  A number of default choices have been proposed to overcome this issue.  A classic reference is \citet{tiao:tan:1965}; a very recent one is \citet{morris:tang:2011}, who use a flat prior $ p( \lambda^2 ) \propto 1 $.

We focus on a proposal by \citet{gelman:2006}, who studies the class of half-$t$ priors for the scale parameter $\lambda$:
$$
p(\lambda \mid d) \propto \left( 1 + \frac{\lambda^2}{d} \right)^{-(d+1)/2} \, 
$$
for some degrees-of-freedom parameter $d$.  The half-$t$ prior has the appealing property that its density evaluates to a nonzero constant at $\lambda = 0$. This distinguishes it from the usual conjugate choice of an inverse-gamma prior for $\lambda^2$, whose density vanishes at $\lambda = 0$.  As \citet{gelman:2006} points out, posterior inference under these priors is no more difficult than it is under an inverse-gamma prior, using the simple trick of parameter expansion.

These facts lead to a simple, compelling argument against the use of the inverse-gamma prior for variance terms in models such as that above.  Since the marginal likelihood of the data, considered as a function of $\lambda$, does not vanish when $\lambda = 0$, neither should the prior density $\p(\lambda)$.  Otherwise, the posterior distribution for $\lambda$ will be inappropriately biased away from zero.  This bias, moreover, is most severe near the origin, precisely in the region of parameter space where the benefits of shrinkage become most pronounced.

This paper studies the special case of a half-Cauchy prior for $\lambda$ with three goals in mind.  First, we embed it in the wider class of hypergeometric inverted-beta priors for $\lambda^2$, and derive expressions for the resulting posterior moments and marginal densities.  Second, we derive expressions for the classical risk of Bayes estimators arising from this class of priors.  In particular, we prove a result that allows us to characterize the improvements in risk near the origin ($\Vert \bbeta \Vert \approx 0$) that are possible using the wider class.  Having proven our risk results for all members of this wider class, we then return to the special case of the half-Cauchy; we find that the frequentist risk profile of the resulting Bayes estimator is quite favorable, and rather similar to that of the positive-part James--Stein estimator.  Therefore Bayesians can be comfortable using the prior on purely frequentist grounds.

Third, we attempt to provide some insight about the use of such priors in situations where $\bbeta$ is expected to be sparse.  We find that the arguments of \citet{gelman:2006} in favor of the half-Cauchy are, if anything, amplified in the presence of sparsity, and that the inverse-gamma prior can have an especially distorting effect on posterior inference for sparse signals. 

Overall, our results provide a complementary set of arguments in addition to those of \citet{gelman:2006} that support the routine use of the half-Cauchy prior: its excellent (frequentist) risk properties, and its sensible behavior in the presence of sparsity compared to the usual conjugate alternative.  Bringing all these arguments together, we contend that the half-Cauchy prior is a sensible default choice for a top-level variance in Gaussian hierarchical models.  We echo the call for it to replace inverse-gamma priors in routine use, particularly given the availability of a simple parameter-expanded Gibbs sampler for posterior computation.

\section{Inverted-beta priors and their generalizations}

Consider the family of inverted-beta priors for $\lambda^2$:
$$
p(\lambda^2) = \frac{ (\lambda^2)^{b-1} \ (1+\lambda^2)^{-(a+b)}} {\Be(a,b)} \, ,
$$
where $\Be(a,b)$ denotes the beta function, and where $a$ and $b$ are positive reals.  A half-Cauchy prior for $\lambda$ corresponds to an inverted-beta prior for $\lambda^2$ with $a = b = 1/2$.  This family also generalizes the robust priors of \citet{strawderman:1971} and \citet{BergerAnnals1980}; the normal-exponential-gamma prior of \citet{griffin:brown:2005}; and the horseshoe prior of \citet{Carvalho:Polson:Scott:2008a}.  The inverted-beta distribution is also known as the beta-prime or Pearson Type VI distribution.  An inverted-beta random variable is equal in distribution to the ratio of two gamma-distributed random variables having shape parameters $a$ and $b$, respectively, along with a common scale parameter.

The inverted-beta family is itself a special case of a new, wider class of hypergeometric inverted-beta distributions having the following probability density function:
\begin{equation}
\label{MTIBdensity}
p(\lambda^2) = C^{-1} (\lambda^2)^{b-1} \ (\lambda^2 + 1)^{-(a+b)} \ \exp \left\{ -\frac{s}{1+\lambda^2} \right\} \ \left\{ \tau^2 + \frac{1-\tau^2}{1+\lambda^2} \right\}^{-1} \, ,
\end{equation}
for $a>0$, $b>0$, $\tau^2>0$, and $s \in \mathcal{R}$.  This comprises a wide class of priors leading to posterior moments and marginals that can be expressed using confluent hypergeometric functions.  In Appendix \ref{hyperg.integral.details.appendix} we give details of these computations, which yield
\begin{equation}
\label{normalizing.constant.phi1}
C = e^{-s} \ \Be(a, b) \ \Phi_1(b, 1, a + b, s, 1-1/\tau^2) \, ,
\end{equation}
where $\Phi_1$ is the degenerate hypergeometric function of two variables \citep[9.261]{gradshteyn:ryzhik:1965}.  This function can be calculated accurately and rapidly by transforming it into a convergent series of $\kummertwo$ functions  \citep[\S 9.2 of][]{gradshteyn:ryzhik:1965, gordy:1998}, making evaluation of (\ref{normalizing.constant.phi1}) quite fast for most choices of the parameters.

Both $\tau$ and $s$ are global scale parameters, and do not control the behavior of $\p(\lambda)$ at $0$ or $\infty$.  The parameters $a$ and $b$ are analogous to those of beta distribution.  Smaller values of $a$ encourage heavier tails in $\p(\beta$), with $a=1/2$, for example, yielding Cauchy-like tails.  Smaller values of $b$ encourage $\p(\beta)$ to have more mass near the origin, and eventually to become unbounded; $b = 1/2$ yields, for example, $\p(\beta) \approx \log(1+1/\beta^2)$ near $0$.

We now derive expressions for the moments of $\p(\bbeta \mid \by, \sigma^2)$ and the marginal likelihood $p(\by \mid \sigma^2)$ for priors in this family.  As a special case, we easily obtain the posterior mean for $\bbeta$ under a half-Cauchy prior on $\lambda$.

Given $\lambda^2$ and $\sigma^2$, the posterior distribution of $\bbeta$ is multivariate normal, with mean $m$ and variance $V$ given by
$$
m = \left(1 - \frac{1}{1+\lambda^2} \right) \by \quad , \quad V = \left(1 - \frac{1}{1+\lambda^2} \right) \sigma^2 \, .
$$
Define $\kappa = 1/(1+\lambda^2)$.  By Fubini's theorem, the posterior mean and variance of $\bbeta$ are
\begin{eqnarray}
\E(\bbeta \mid \by, \sigma^2) &=& \{ 1 - \E(\kappa \mid \by, \sigma^2) \} \by \label{normalpostmean} \\
\mbox{var}(\bbeta \mid \by, \sigma^2) &=&  \{ 1 - \E(\kappa \mid \by, \sigma^2) \} \sigma^2 \label{normalpostvar} \, ,
\end{eqnarray}
now conditioning only on $\sigma^2$.

It is most convenient to work with $p(\kappa)$ instead:
\begin{equation}
\label{HBdensity1}
p(\kappa) \propto \kappa^{a - 1} \ (1-\kappa)^{b-1} 
\ \left\{ \frac{1}{\tau^2} + \left(1 - \frac{1}{\tau^2} \right) \kappa \right\}^{-1} 
e^{-\kappa s} \, .
\end{equation}
The joint density for $\kappa$ and $\by$ takes the same functional form:
$$
p(y_1 , \ldots, y_p, \kappa) \propto \kappa^{a' - 1} \ (1-\kappa)^{b-1} 
\ \left\{ \frac{1}{\tau^2} + \left(1 - \frac{1}{\tau^2} \right) \kappa \right\}^{-1} 
e^{-\kappa s'} \, ,
$$
with $a' = a + p/2$, and $s' = s + Z / 2\sigma^2$ for $Z = \sum_{i=1}^p y_i^2$.  Hence the posterior for $\lambda^2$ is also a hypergeometric inverted-beta distribution, with parameters $(a', b, \tau^2, s')$.

Next, the moment-generating function of (\ref{HBdensity1}) is easily shown to be
$$
M(t) = e^t \ \frac{\Phi_1(b, 1, a + b, s-t, 1-1/\tau^2)}{ \Phi_1(b, 1, a + b, s, 1-1/\tau^2)} \, .
$$
See, for example, \citet{gordy:1998}.  Expanding $\Phi_1$ as a sum of $\kummer$ functions and using the differentiation rules given in Chapter 15 of \citet{Abra:Steg:1964} yields
\begin{equation}
\label{equation.moments}
\E(\kappa^n \mid \by, \sigma^2) =  \frac{(a')_n}{(a' + b)_n} \frac{\Phi_1(b, 1, a' + b + n, s', 1-1/\tau^2)}{ \Phi_1(b, 1, a' + b, s', 1-1/\tau^2)} \, .
\end{equation}

Combining the above expression with (\ref{normalpostmean}) and (\ref{normalpostvar}) yields the conditional posterior mean and variance for $\bbeta$, given $\by$ and $\sigma^2$.  Similarly, the marginal density $p(\by \mid \sigma^2)$ is a simple expression involving the ratio of prior to posterior normalizing constants:
$$
p(\by \mid \sigma^2) = (2\pi \sigma^2)^{-p/2} \ \exp \left( - \frac{Z}{2\sigma^2} \right) \ \frac{\Be(a', b)}{\Be(a, b)} \ 
\frac{\Phi_1(b, 1, a' + b, s', 1-1/\tau^2)}
{\Phi_1(b, 1, a + b, s, 1-1/\tau^2)} \, .
$$

\begin{figure}
\begin{center}
\includegraphics[width=5.5in]{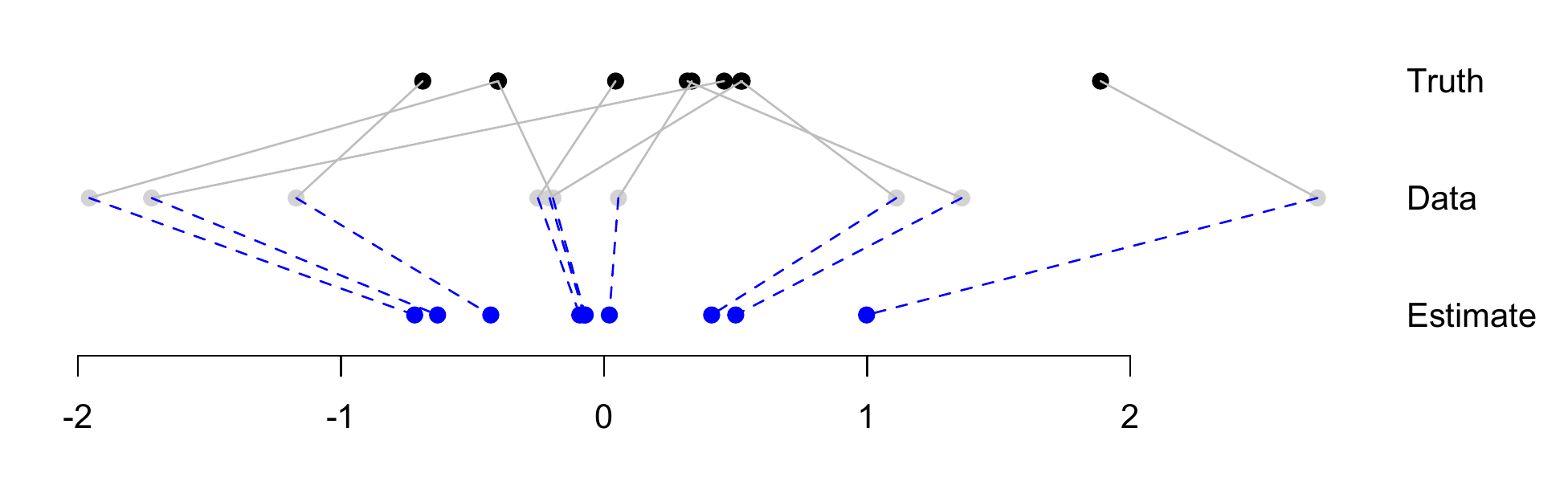}
\caption{\label{fig:shrinkageexample} Ten true means drawn from a standard normal distribution; data from these means under standard normal noise; shrinkage estimates under the half-Cauchy prior for $\lambda$. }
\end{center}
\end{figure}

\section{Classical risk results}

These priors are useful in situations where standard priors like the inverse-gamma or Jeffreys' are inappropriate or ill-behaved.  Non-Bayesians will find them useful for generating easily computable shrinkage estimators that have known risk properties.  Bayesians will find them useful for generating computationally tractable priors for a variance parameter.  We argue that these complementary but overlapping goals can both be satisfied for the special case of the half-Cauchy.  To show this, we first characterize the risk properties of the Bayes estimators that result from the wider family of priors used for a normal mean under a quadratic loss. Our analysis shows that:

\begin{enumerate}
\item The hypergeometric--beta family provides a large class of Bayes estimators that will perform no worse than the MLE in the tails, i.e.~when $\Vert \bbeta \Vert^2$ is large.
\item Major improvements over the James--Stein estimator are possible near the origin.  This can be done in several ways: by choosing $a$ large relative to $b$, by choosing $a$ and $b$ both less than $1$, by choosing $s$ negative, or by choosing $\tau < 1$.  Each of these choices involves a compromise somewhere else in the parameter space.
\end{enumerate}

We now derive expressions for the classical risk, as a function of $\Vert \bbeta \Vert$, for the resulting Bayes estimators under hypergeometric inverted-beta priors.  Assume without loss of generality that $\sigma^2 = 1$, and let $p(\by) = \int p(\by | \bbeta) p(\bbeta) d \bbeta$ denote the marginal density of the data.  Following \citet{stein:1981}, write the the mean-squared error of the posterior mean $\hat{\bbeta}$ as
$$
\E( \Vert \hat{\bbeta} - \bbeta \Vert^2)  = p + \E_{\by} \left ( \Vert g(\by) \Vert^2 + 2 \sum_{i=1}^p \frac{\partial}{\partial y_i} g (\by) \right ) \, ,
$$
where $g(\by) =  \nabla \log p(\by) $. In turn this can be written as
$$
\E( \Vert \hat{\bbeta} - \bbeta \Vert^2) = p + 4 E_{\by \mid \bbeta} \left ( \frac{\nabla^2 \sqrt{p(\by)}}{\sqrt{p(\by)} } \right ) \, .
$$
%The following identity is useful for calculating $ \nabla^2 \sqrt{p(\by)} $:
%$$
% \frac{ \nabla^2 \sqrt{p(\by)}}{ \sqrt{p(\by)} } =
% \frac{1}{2} \left[ \frac{ \Delta p(\by) }{ p(\by) } - \frac{1}{2} \left\{ \frac{ \Vert \nabla p(\by) \Vert}{p(\by)} \right\}^2 \right] \, ,
%$$
%where $ \Delta $ is the Laplacian.
We now state our main result concerning computation of this quantity.

\begin{proposition}
\label{expression.for.MSE}
Suppose that $\bbeta \sim \N_p(0, \lambda^2 I)$, that $\kappa = 1/(1+\lambda^2)$, and that the prior $\p(\kappa)$ is such that $\lim_{ \kappa \rightarrow 0 , 1} \kappa ( 1 - \kappa ) \p( \kappa ) = 0 $.  Define
$$
m_p ( Z ) = \int_0^1 \kappa^{ \frac{p}{2} } e^{ - \frac{Z}{2} \kappa } \p ( \kappa ) \ \dd \kappa \, 
$$
for $Z = \sum_{i=1}^p y_i^2$.  Then as a function of $\bbeta$, the quadratic risk of the posterior mean under $\p(\kappa)$ is
\begin{equation}
\label{hb.risk.decomposition1}
\E( \Vert \hat{\bbeta} - \bbeta \Vert^2) = p + 2 E_{Z \mid \bbeta} \left\{ Z \frac{  m_{p+4} ( Z ) }{m_{p} (Z)} - p g(Z) - \frac{Z}{2} g(Z)^2 \right\} \, ,
\end{equation}
where
$ g(Z)= E( \kappa \mid Z ) $, and where
\begin{equation}
\label{hb.risk.decomposition2}
Z \frac{  m_{p+4} ( Z ) }{m_{p} (Z)} = ( p+ Z + 4) g(Z) - ( p+2 ) - 
 E_{ \kappa \mid Z } \left\{ 2 \kappa ( 1 - \kappa ) \frac{ \p^{\prime} ( \kappa ) }{ \p ( \kappa ) }
 \right\} \, .
\end{equation}
\end{proposition}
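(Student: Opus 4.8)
The plan is to reduce the whole computation to one-dimensional calculus in the radial variable $Z = \Vert \by \Vert^2$, exploiting the fact that the marginal density depends on $\by$ only through $Z$. Marginalizing the hierarchy with $\sigma^2 = 1$ gives $(y_i \mid \kappa) \sim \N(0, 1/\kappa)$ independently, so that $p(\by \mid \kappa) = (2\pi)^{-p/2}\kappa^{p/2}\exp(-Z\kappa/2)$; integrating against the prior $\p(\kappa)$ yields $p(\by) = (2\pi)^{-p/2} m_p(Z)$. Hence $\sqrt{p(\by)}$ is a constant multiple of $\sqrt{m_p(Z)}$, and the constant cancels in the ratio appearing in the Brown--Stein identity $\E(\Vert \hat{\bbeta} - \bbeta\Vert^2) = p + 4\,\E_{\by\mid\bbeta}\{\nabla^2\sqrt{p(\by)}/\sqrt{p(\by)}\}$ already recorded above. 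Everything therefore comes down to evaluating $\nabla^2\sqrt{m_p(Z)}/\sqrt{m_p(Z)}$.

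For the first identity (\ref{hb.risk.decomposition1}), I would apply the chain rule to $h(\by) = \sqrt{m_p(Z)}$. Since $\partial Z/\partial y_i = 2 y_i$, a direct computation gives $\nabla^2 h = 4 Z\, h''(Z) + 2 p\, h'(Z)$, and writing $h = m_p^{1/2}$ turns $\nabla^2 h / h$ into an expression in $m_p'/m_p$ and $m_p''/m_p$. Differentiating under the integral sign in the definition of $m_p(Z)$ --- justified by dominated convergence, as the integrand is smooth and uniformly bounded on the compact interval $[0,1]$ --- produces the clean recursions $m_p'(Z) = -\tfrac12 m_{p+2}(Z)$ and $m_p''(Z) = \tfrac14 m_{p+4}(Z)$. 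Recognizing that the posterior density of $\kappa$ given $Z$ is proportional to $\kappa^{p/2}e^{-Z\kappa/2}\p(\kappa)$, so that $m_{p+2n}(Z)/m_p(Z) = \E(\kappa^n \mid Z)$ and in particular $m_{p+2}/m_p = g(Z)$, I can substitute and collect terms; the algebra collapses to $\nabla^2 h/h = \tfrac{Z}{2}\,m_{p+4}/m_p - \tfrac{Z}{4}g(Z)^2 - \tfrac{p}{2}g(Z)$, and multiplying by $4$ gives exactly (\ref{hb.risk.decomposition1}).

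The second identity (\ref{hb.risk.decomposition2}) is where the prior's derivative enters, and I would obtain it by integration by parts. The device is to integrate the total derivative
\begin{equation*}
\frac{\dd}{\dd\kappa}\Bigl\{ \kappa(1-\kappa)\,\kappa^{p/2}\,e^{-Z\kappa/2}\,\p(\kappa)\Bigr\}
\end{equation*}
over $[0,1]$. The boundary term is $[\kappa(1-\kappa)\kappa^{p/2}e^{-Z\kappa/2}\p(\kappa)]_0^1$, which vanishes precisely because of the hypothesis $\lim_{\kappa\to 0,1}\kappa(1-\kappa)\p(\kappa)=0$; this is the one place the regularity assumption on $\p$ is used, and it is the crux of the argument. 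Expanding the derivative by the product rule generates four groups of terms, each of which, after multiplying through, is a moment $m_{p+2n}(Z)$ or the expectation $\E_{\kappa\mid Z}\{\kappa(1-\kappa)\p'(\kappa)/\p(\kappa)\}$; setting the integral of the total derivative to zero and solving for $Z\,m_{p+4}/m_p$ yields (\ref{hb.risk.decomposition2}).

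The genuinely delicate part is not the calculus but the bookkeeping in this last step: one must keep the factors of $\kappa$, $(1-\kappa)$, $p/2$ and $Z/2$ straight while converting every integral into the normalized quantities $g(Z)$ and $m_{p+4}/m_p$, and verify that the boundary condition is exactly strong enough to discard the endpoint contributions without imposing anything further on $\p$. Once the identity is arranged so that $Z\,m_{p+4}/m_p$ is isolated, substituting it back into (\ref{hb.risk.decomposition1}) completes the characterization of the risk.
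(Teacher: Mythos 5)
Your proposal is correct and follows essentially the same route as the paper's proof: both rest on the Stein--Brown identity $\E(\Vert \hat{\bbeta}-\bbeta\Vert^2) = p + 4\E_{\by \mid \bbeta}\{\nabla^2\sqrt{p(\by)}/\sqrt{p(\by)}\}$, both express the needed derivatives of $m_p(Z)$ as posterior moments of $\kappa$ (so that $m_{p+2}/m_p = g(Z)$), and your ``integrate the total derivative and discard the vanishing boundary term'' device for (\ref{hb.risk.decomposition2}) is exactly the paper's integration by parts of $Z\{m_{p+2}(Z)-m_{p+4}(Z)\}$ against $\dd(-e^{-Z\kappa/2})$. The only difference is presentational: you compute the gradient and Laplacian of the radially symmetric marginal directly via the chain rule in $Z$ and differentiation under the integral sign, where the paper instead quotes the gradient formula of Fourdrinier et al.~and the identity relating $\nabla^2\sqrt{p(\by)}/\sqrt{p(\by)}$ to $\Delta p(\by)/p(\by)$ and the squared score.
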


\begin{proof}
See Appendix \ref{appendix.MSEproof}.
\end{proof}

\begin{figure}
\begin{center}
\includegraphics[width=5.5in]{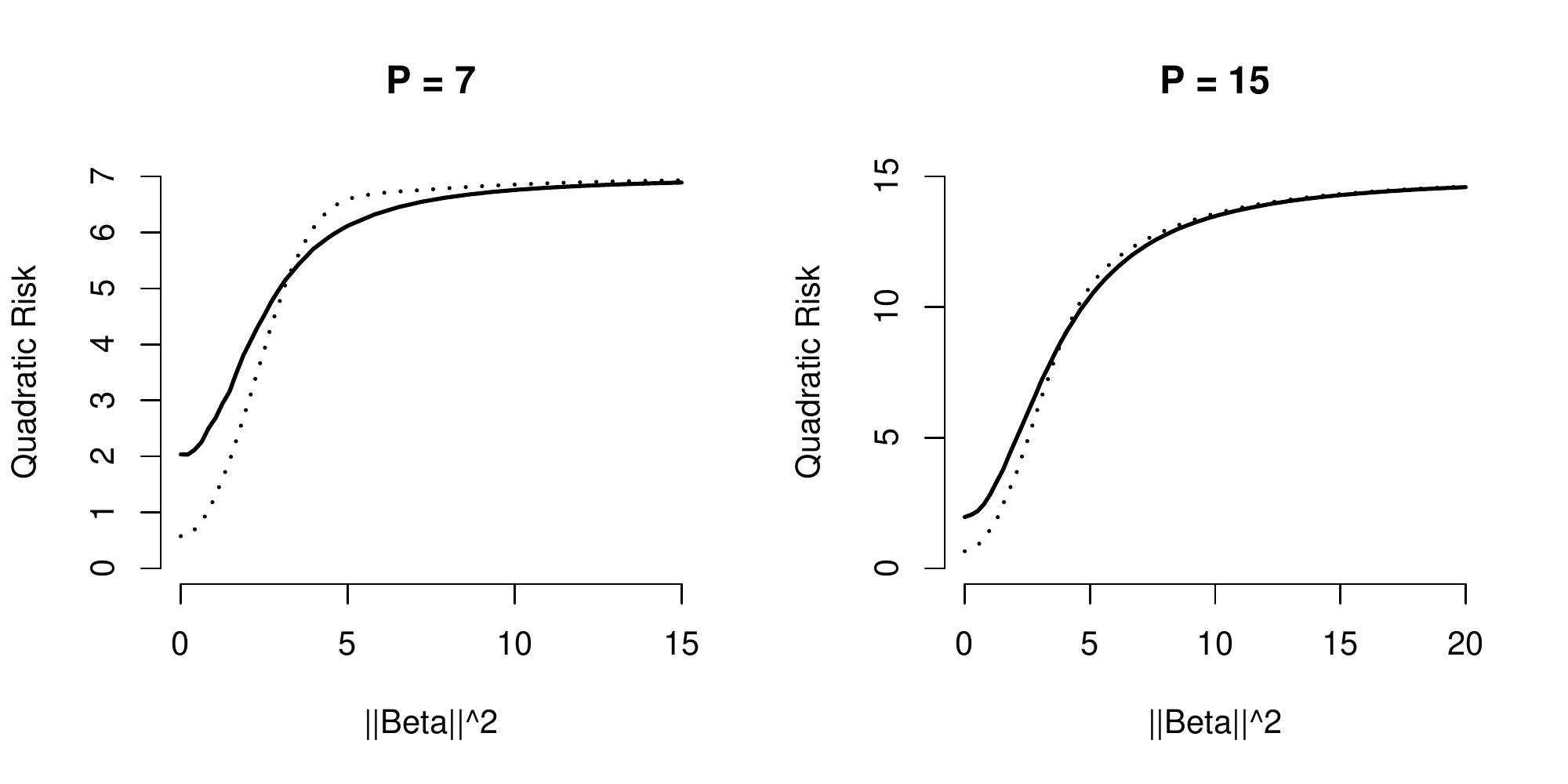}
\caption{\label{horseshoeMSE1.figure} Mean-squared error as a function of $\Vert \bbeta \Vert$ for $p=7$ and $p=15$.  Solid line: James--Stein estimator.  Dotted line: Bayes estimator under a half-Cauchy prior for $\lambda$.}
\end{center}
\end{figure}

%\begin{figure}
%\begin{center}
%\includegraphics[width=5.5in]{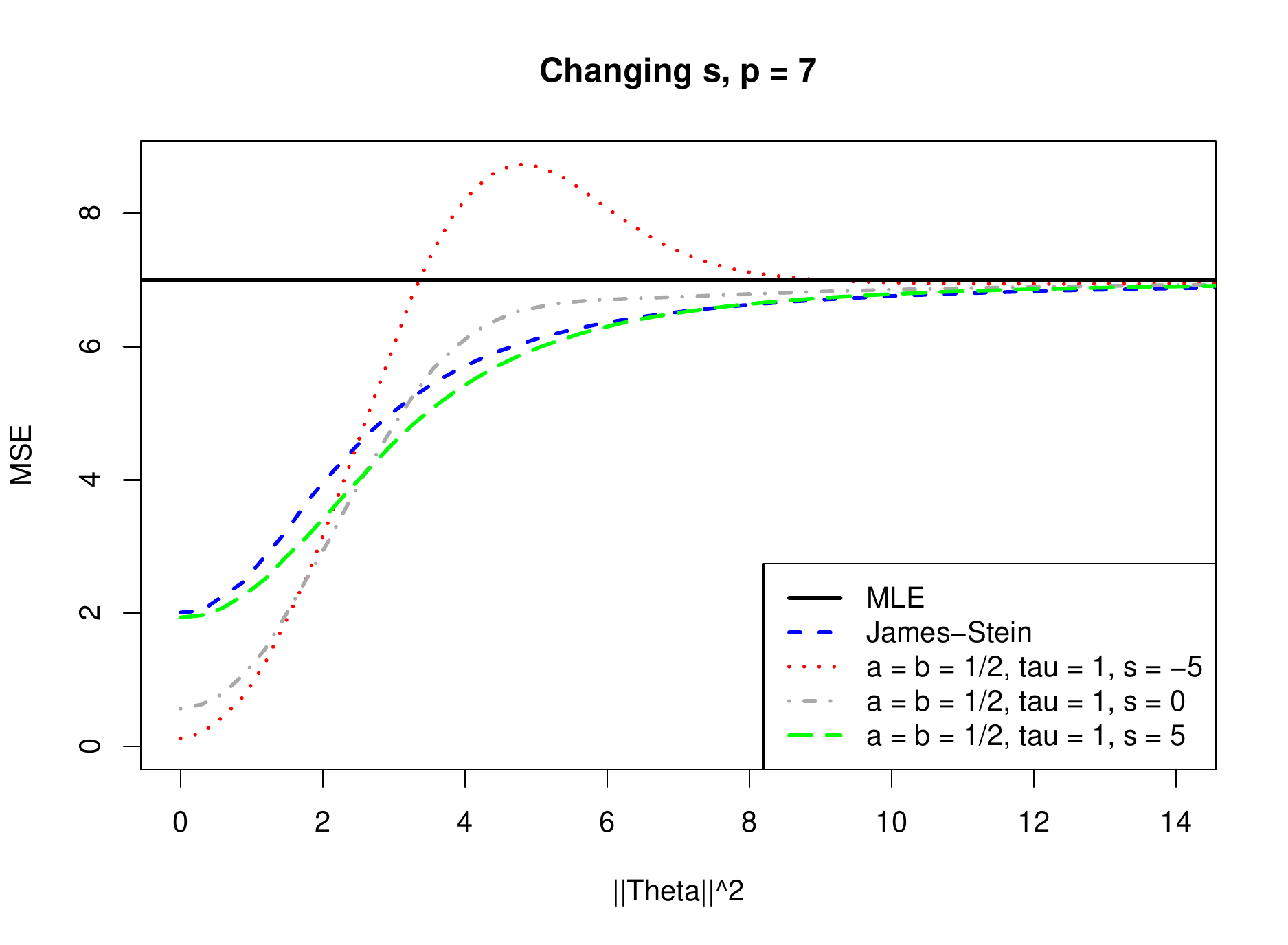}
%\includegraphics[width=5.5in]{}
%\caption{\label{horseshoeMSE2.figure} Mean-squared error as a function of $\Vert \bbeta \Vert$: effect of changing $s$ (top) and $\tau$ (bottom).}
%\end{center}
%\end{figure}

Proposition \ref{expression.for.MSE} is useful because it characterizes the risk in terms of two known quantities: the integral $m_p(Z)$, and the posterior expectation $g(Z) = \E(\kappa \mid Z)$.  Using the results of the previous section, these are easily obtained under a hypergeometric inverted-beta prior for $\lambda^2$.  Furthermore, given $\Vert \bbeta \Vert$, $Z = U^2 + V$ in distribution, where
\begin{eqnarray*}
U &\sim& \N \left( \Vert \bbeta \Vert, 1 \right) \\
V &\sim& \chi^2_{p-1} \, .
\end{eqnarray*}
The risk of the Bayes estimator is therefore easy to evaluate as a function of $\Vert \bbeta \Vert^2$.  These expressions can be compared to those of, for example, \citet{george:liang:xu:2006}, who consider Kullback--Leibler predictive risk for similar priors.

Our interest is in the special case $a=b=1/2$, $\tau=1$, and $s=0$, corresponding to a half-Cauchy prior for the global scale $\lambda$.  Figure \ref{horseshoeMSE1.figure} shows the classical risk of the Bayes estimator under this prior for $p=7$ and $p=15$.  The risk of the James-Stein estimator is shown for comparison.  These pictures look similar for other values of $p$, and show overall that the half-Cauchy prior for $\lambda$ leads to a Bayes estimator that is competitive with the James--Stein estimator.

Figure \ref{fig:shrinkageexample} shows a simple example of the posterior mean under the half-Cauchy prior for $\lambda$ when $p=10$, calculated for fixed $\sigma$ using the results of the previous section.  For this particular value of $\bbeta$ the expected squared-error risk of the MLE is $10$, and the expected squared-error risk of the half-Cauchy posterior mean is $8.6$.

A natural question is: of all the hypergeometric inverted-beta priors, why choose the half-Cauchy?  There is no iron-clad reason to do so, of course, and we can imagine many situations where subjective information would support a different choice.  But in examining many other members of the class, we have observed that the half-Cauchy seems to occupy a sensible ``middle ground'' in terms of frequentist risk.  To study this, we are able to appeal to the theory of the previous section.  See, for example, Figure \ref{fig:hypbetaMSE}, which compares several members of the class for the case $p=7$.  Observe that large gains over James--Stein near the origin are possible, but only at the expense of minimaxity.  The half-Cauchy, meanwhile, still improves upon the James--Stein estimator near the origin, but does not sacrifice good risk performance in other parts of the parameter space.  From a purely classical perspective, it looks like a sensible default choice, suitable for repeated general use.

\begin{figure}
\begin{center}
\includegraphics[width=5.2in]{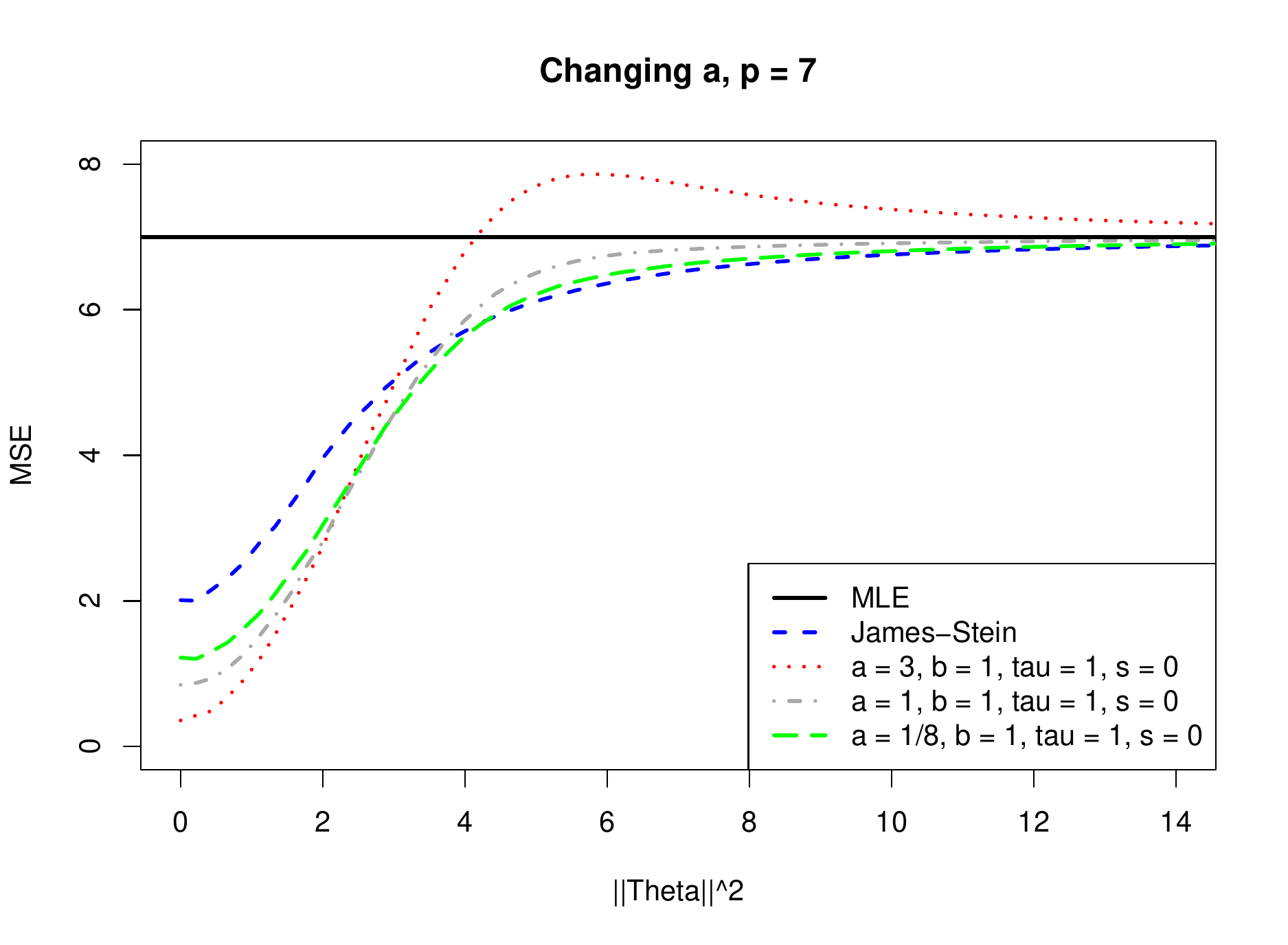} \\
\vspace{1pc}
\includegraphics[width=5.2in]{graphs/hypbeta-MSE3.pdf}
\caption{\label{fig:hypbetaMSE} Mean-squared error as a function of $\Vert \bbeta \Vert^2$ for $p=7$ and various cases of the hypergeometric inverted-beta hyperparameters.}
\end{center}
\end{figure}

\section{Global scale parameters in local-shrinkage models}

A now-canonical modification of the basic hierarchical model from the introduction involves the use of local shrinkage parameters:
\begin{eqnarray*}
(y_{i} \mid \beta_i, \sigma^2) &\sim& \N(\beta_i, \sigma^2) \\
(\beta_i \mid \lambda^2, u_i^2, \sigma^2) &\sim& \N(0, \lambda^2 \sigma^2 u_i^2) \\
u_i^2 &\sim& f(u_i^2) \\
\lambda^2 &\sim& g(\lambda^2) \, .
\end{eqnarray*}
Mixing over $u_i$ leads to a non-Gaussian marginal for $\beta_i$.  For example, choosing an exponential prior for each $u_i^2$ results in a Laplace (lasso-type) prior.  This class of models provides a Bayesian alternative to penalized-likelihood estimation.  When the underlying vector of means is sparse, these global-local shrinkage models can lead to large improvements in both estimation and prediction compared with pure global shrinkage rules.  There is a large literature on the choice of $p(u_i^2)$, with \citet{polson:scott:2010a} providing a recent review.

As many authors have documented, strong global shrinkage combined with heavy-tailed local shrinkage is why these sparse Bayes estimators work so well at sifting signals from noise.  Intuitively, the idea is that $\lambda$ acts as a global parameter that adapts to the underlying sparsity of the signal.  When few signals are present, it is quite common for the marginal likelihood of $\by$ as a function of $\lambda$ to concentrate near $0$ (``shrink globally''), and for the signals to be flagged via very large values of the local shrinkage parameters $u_i^2$ (''act locally''). Indeed, in some cases the marginal maximum-likelihood solution can be the degenerate $\hat{\lambda} = 0$ \citep[see, for example,][]{tiao:tan:1965}.  

The classical risk results of the previous section no longer apply to a model with these extra local-shrinkage parameters, since the marginal distribution of $\bbeta$, given $\lambda$, is not multivariate normal.  Nonetheless, the case of sparsity serves only to amplify the purely Bayesian argument in favor of the half-Cauchy prior for a global scale parameter---namely, the argument that $p(\lambda \mid \by)$ should not be artificially pulled away from zero by an inverse-gamma prior.

\begin{figure}
\begin{center}
\includegraphics[width=5.5in]{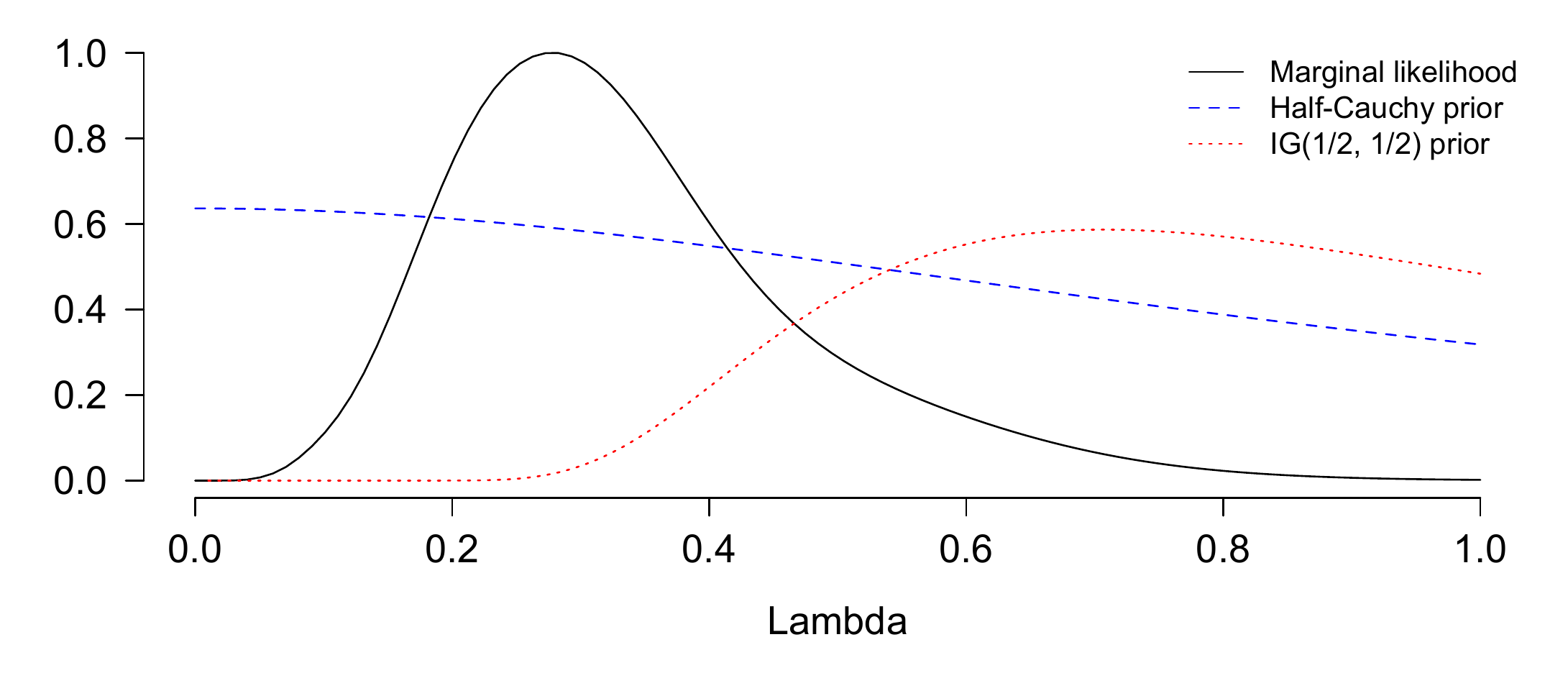}
\caption{\label{fig:MLcomparison} The black line shows the marginal likelihood of the data as a function of $\lambda$ under a horseshoe prior for each $\beta_i$. (The likelihood has been renormalized to obtained a maximum of $1$.)  The blue and red lines show two priors for $\lambda$: the half-Cauchy, and that induced by an inverse-Gamma prior on $\lambda^2$.}
\end{center}
\end{figure}

Figure \ref{fig:MLcomparison} vividly demonstrates this point.  We simulated data from a sparse model where $\bbeta$ contained the entries $(5,4,3,2,1)$ along with $45$ zeroes, and where $y_{ij} \sim \N(0,1)$ for $j=1,2,3$.  We then used Markov Chain Monte Carlo to compute the marginal likelihood of the data as a function of $\lambda$, assuming that each  $\beta_i$ has a horseshoe prior \citep{Carvalho:Polson:Scott:2008a}.  This can be approximated by assuming a flat prior for $\lambda$ (here truncated above at 10), and then computing the conditional likelihood $p(\by \mid \lambda, \sigma, u_1^2, \ldots, u_p^2)$ over a discrete grid of $\lambda$ values at each step of the Markov chain.   The marginal likelihood function can then be approximated as the pointwise average of the conditional likelihood over the samples from the joint posterior.

This marginal likelihood has been renormalized to obtain a maximum of $1$ and then plotted alongside two alternatives: a half-Cauchy prior for $\lambda$, and the prior induced by assuming that $\lambda^2 \sim \IG(1/2,1/2)$.  Under the inverse-gamma prior, there will clearly be an inappropriate biasing of $p(\lambda)$ away from zero, which will negatively affect the ability of the model to handle sparsity efficiently.  For data sets with even more ``noise'' entries in $\by$, the distorting effect of a supposedly ``default'' inverse-gamma prior will be even more pronounced, as the marginal likelihood will favor values of $\tau$ very near zero (along with a small handful of very large $u_i^2$ terms).

\section{Discussion}

On strictly Bayesian grounds, the half-Cauchy is a sensible default prior for scale parameters in hierarchical models: it tends to a constant at $\lambda = 0$; it is quite heavy-tailed; and it leads to simple conjugate MCMC routines, even in more complex settings.  All these desirable features are summarized by \citet{gelman:2006}.  Our results give a quite different, classical justification for this prior in high-dimensional settings: its excellent quadratic risk properties.  The fact that two independent lines of reasoning both lead to the same prior is a strong argument in its favor as a default proper prior for a shared variance component.  We also recommend scaling the $\beta_i$'s by $\sigma$, as reflected in the hierachical model from the introduction.  This is the approach taken by \citet[Section 5.2]{jeffreys1961}, and we cannot improve upon his arguments.

In addition, our hypergeometric inverted-beta class provides a useful generalization of the half-Cauchy prior, in that it allows for greater control over global shrinkage through $\tau$ and $s$.  It leads to a large family of estimators with a wide range of possible behavior, and generalizes the form noted by \citet{maruyama:1999}, which contains the positive-part James--Stein estimator as a limiting, improper case.   Further study of this class may yield interesting frequentist results, quite apart from the Bayesian implications considered here.  The expressions for marginal likelihoods also have connections with recent work on generalized $g$-priors \citep{maruyama:george:2008,polson:scott:2010b}.  Finally, all estimators arise from proper priors on $\lambda^2$, and will therefore be admissible.

There are still many open issues in default Bayes analysis for hierarchical models that are not addressed by our results.  One issue is whether to mix further over the scale in the half-Cauchy prior, $ \lambda \sim C^+ ( 0 , \tau ) $.  One possibility here is simply to let $ \tau \sim C^+ (0,1)$. We then get the following default ``double'' half-Cauchy prior for
$ \lambda$:
$$
p( \lambda ) = \frac{2}{\pi^2} \int_0^\infty \frac{1}{1+ \tau^2} \frac{1}{ \tau( 1 + \frac{\lambda^2}{\tau^2} )} d \tau =
\frac{\ln | \lambda |}{\lambda^2 -1}  \, .
$$
Admittedly, it is difficult to know where to stop in this ``turtles all the way down'' approach to mixing over hyperparameters.  (Why not, for example, mix still further over a scale parameter for $\tau$?)  Even so, this prior has a number of appealing properties.  It is proper, and therefore leads to a proper posterior; it is similar in overall shape to Jeffreys' prior; and it is unbounded at the origin, and will therefore not down-weight the marginal likelihood as much as the half-Cauchy for near-sparse configurations of $\bbeta$.  The implied prior on the shrinkage weight $\kappa$ for the double half-Cauchy is
$$
p( \kappa ) \propto \frac{ \ln \left ( \frac{1-\kappa}{\kappa} \right ) }{ 1 - 2 \kappa } \frac{1}{\sqrt{\kappa(1-\kappa)}} \, .
$$
This is like the horseshoe prior on the shrinkage weight \citep{Carvalho:Polson:Scott:2008a}, but with an extra factor that comes from the fact that one is letting the scale itself be random with a $C^+(0,1)$ prior.

We can also transform to the real line by letting $ \psi = \ln \lambda^2 $.   For the half-Cauchy prior $p(\lambda) \propto 1 /(1 + \lambda^2) $ this transformation leads to
$$
p( \psi ) \propto \frac{ e^{ \frac{\psi}{2} } }{ 1+ e^\psi  } = \left (  e^{ \frac{\psi}{2} } + e^{ - \frac{\psi}{2} } \right )^{-1} 
= sech \left ( \frac{\psi}{2} \right ) \, .
$$
This is the hyperbolic secant distribution, which may provide fertile ground for further generalizations or arguments involving sensible choices for a default prior.

A more difficult issue concerns the prior scaling for $\lambda$ in the presence of unbalanced designs---that is, when $y_{ij} \sim \N(\beta_i, \sigma^2)$ for $j = 1, \ldots, n_i$, and the $n_i$'s are not necessary equal.  In this case most formal non-informative priors for $\lambda$ (e.g.~the reference prior for a particular parameter ordering) involve complicated functions of the $n_i$'s \citep[see, e.g.][]{yang:berger:1997}.  These expressions emerge from a particular mathematical formalism that, in turn, embodies a particular operational definition of ``non-informative.''

We have focused on default priors that occupy a middle ground between formal non-informative analysis and pure subjective Bayes.  This is clearly an important situation for the many practicing Bayesians who do not wish to use noninformative priors, whether for practical, mathematical, or philosophical reasons.  An example of a situation in which formal noninformative priors for $\lambda$ should not be used on mathematical grounds is when $\bbeta$ is expected to be sparse; see \cite{scottberger06} for a discussion of this issue in the context of multiple-testing.  It is by no means obvious how, or even whether, the $n_i$'s should play a role in scaling $\lambda$ within this (admittedly ill-defined) paradigm of ``default'' Bayes.

Finally, another open issue is the specification of default priors for scale parameters in non-Gaussian models.   For example, in logistic regression, the likelihood is highly sensitive to large values of the underyling linear predictor.  It is therefore not clear whether something so heavy-tailed as the half-Cauchy is an appropriate prior for the global scale term for logistic regression coefficients.  All of these issues merit further research.

%The archetypal example where the standard James--Stein shrinkage rule $ \hat{\theta}_{JS} $ performs poorly in a sparse
%setting is the case of the $r$-spike parameter value $ \theta_r$ with $r$ coordinates at $ \sqrt{d/r} $ which has $ \Vert \theta \Vert =d $.
%Johnstone (2004) shows that $ E \Vert \hat{\theta}^{JS} - \theta \Vert \leq d $ with risk $2$ at the origin. Moreover, 
%$$
%\frac{d \Vert \theta \Vert^2}{ d + \Vert \theta \Vert^2} \leq R \left ( \hat{\theta}^{JS} , \theta_r \right ) \leq
%2 + \frac{d \Vert \theta \Vert^2}{ d + \Vert \theta \Vert^2}
%$$
%Hence for the $r$-spike parameter value $ R \left ( \hat{\theta}^{JS} , \theta_r \right ) \geq (d/2) $.
%The thresholding rule at a value $ \sqrt{2 \ln d} $ is known to have better risk, $ \sqrt{\ln d } $.
%
%
%As other authors have noted Berger (1980), there is a tension between minimaxity and risk improvements near the origin.  Particularly if there are grounds for suspecting sparsity in $\btheta$, it may be reasonable to give up minimaxity for the sake of these potential gains.

\appendix

\section{Details for computing moments and marginals}
\label{hyperg.integral.details.appendix}

The normalizing constant in (\ref{normalizing.constant.phi1}) is
\begin{equation}
C = \int_0^{1}  \kappa^{\alpha - 1} \ (1 - \kappa)^{\beta - 1} \ \left\{ \frac{1}{\tau^2} + \left(1 - \frac{1}{\tau^2} \right) \kappa \right\}^{-1} \exp(-s \kappa)  \ \dd \kappa \, .
\end{equation}
Let $\eta = 1-\kappa$.  Using the identity that $e^x = \sum_{m=0}^{\infty} x^m / m!$, we obtain
$$
C = e^{-s} \sum_{m=0}^{\infty} \left[ \frac{s^m}{m!} \int_0^1 \eta^{\beta + m -1} (1-\eta)^{\alpha - 1} \{1 - (1-1/\tau^2)\eta\}^{-1} \ \dd \eta \right] \, .
$$
Using properties of the hypergeometric function $\phantom{}_2 F_1$ \citep[\S15.1.1 and \S15.3.1]{Abra:Steg:1964}, this becomes, after some straightforward algebra,
\begin{equation}
\label{doubleseries.expansion}
C = e^{-s} \ \Be(\alpha, \beta) \ \sum_{m=0}^{\infty} \sum_{n=0}^{\infty} \frac{(\beta)_{m+n} }
{(\alpha + \beta)_{m+n} \ m! \ n!} 
\ s^m \ (1-1/\tau^2)^m \, ,
\end{equation}
where $(a)_n$ is the rising factorial.  Appendix C of \citet{gordy:1998} proves that,  for all $\alpha > 0$, $\beta > 0$, and $1/\tau^2 > 0$, the nested series in (\ref{doubleseries.expansion}) converges to a positive real number, yielding
\begin{equation}
C = e^{-s} \ \Be(\alpha, \beta) \ \Phi_1(\beta, 1, \alpha + \beta, s, 1-1/\tau^2) \, ,
\end{equation}
where $\Phi_1$ is the degenerate hypergeometric function of two variables \citep[9.261]{gradshteyn:ryzhik:1965}.

The $\Phi_1$ function can be written as a double hypergeometric series,
\begin{equation}
\Phi_1(\alpha, \beta; \gamma; x,y) = \sum_{m=0}^{\infty} \sum_{n=0}^{\infty}
\frac
{(\alpha)_{m+n} (\beta)_{n} }
{(\gamma)_{m+n} m! n!}
\ y^n \ x^m \, ,
\end{equation}
where $(c)_n$ is the rising factorial.  We use three different representations of $\Phi_1(\alpha, \beta, \gamma, x, y)$ for handling different combinations of arguments, all from \citet{gordy:1998}.  When $0 \leq y < 1$ and $x \geq 0$,
\begin{equation}
\Phi_1(\alpha, \beta, \gamma, x, y) =
\sum_{n=0}^{\infty} \frac{(\alpha)_n } {(\gamma)_n} \frac{x^n}{n!} \ \kummertwo(\beta,\alpha+n; \gamma+n; y) \label{phi1-1} \, .
\end{equation}
When $0 \leq y < 1$ and $x < 0$,
\begin{equation}
\Phi_1(\alpha, \beta, \gamma, x, y) = e^x \ \sum_{n=0}^{\infty} \frac{(\gamma - \alpha)_n } {(\gamma)_n} \frac{(-x)^n}{n!} \ \kummertwo(\beta,\alpha; \gamma+n; y)  \label{phi1-2} \, .
\end{equation}
Finally, when $y<0$,
\begin{equation}
\Phi_1(\alpha, \beta, \gamma, x, y)  = e^x \ (1-y)^{-\beta} \ \Phi_1(\tilde{\alpha}, \beta, \gamma, -x, \tilde{y}) \label{phi1-3} \, ,
\end{equation}
where $\tilde{\alpha} = \gamma - \alpha$ and $\tilde{y} = y/(y-1)$.  Then either (\ref{phi1-1}) or (\ref{phi1-2}) may be used to evaluate the righthand side of (\ref{phi1-3}), depending on the sign of $x$.

%Alternative representations for $\Phi_1$ involving $\kummer$ functions are also available.  In our experience, however, these take longer to converge than those given above.

\section{Proof of Proposition \ref{expression.for.MSE}}
\label{appendix.MSEproof}

\begin{proof}
Begin with Stein's decomposition of risk.  Following Equation (10) of \citet{fourdrinier:etal:1998}, we have
$$
\Vert \nabla m( \by) \Vert = \Vert \by \Vert 
 \int_0^1 \kappa^{ \frac{p}{2} +1} \p(\kappa) e^{ - \frac{Z}{2} \kappa } \ \dd \kappa 
$$
The score can be written as
$$
\frac{ \Vert \nabla m( \by) \Vert}{p(\by)} = \Vert \by \Vert \frac{m_{p+2} ( \Vert \by \Vert )}{m_{p} ( \Vert \by \Vert )} 
 = \Vert \by \Vert \ \E( \kappa \mid Z ) \, ,
$$
and the Laplacian term is $\Delta m( \by) = \int_0^1 \left ( Z \kappa - p \right ) \kappa^{ \frac{p}{2} +1 } \p(\kappa) e^{- \frac{Z}{2} \kappa } \ \dd \kappa$.  Combining these terms, we have,
\begin{eqnarray*}
 \frac{ \Delta m( \by) }{ p(\by) } &=& \frac{\int_0^1 \left ( Z \kappa - p \right ) \kappa^{ \frac{p}{2} +1 } 
\p(\kappa) e^{ - \frac{Z}{2} \kappa } \ \dd \kappa}
{\int_0^1 \kappa^{ \frac{p}{2}} \p(\kappa) e^{ - \frac{Z}{2} \kappa } \ \dd \kappa } \\
&=& Z \frac{m_{p+4}(Z)}{m_p(Z)} - p \frac{m_{p+2} (Z)}{m_p(Z)} \, .
\end{eqnarray*}

The risk term $ \Delta \sqrt{p(\by)} / \sqrt{p(\by)} $ is then computed using the identity
$$
 \frac{ \nabla^2 \sqrt{p(\by)}}{ \sqrt{p(\by)} } =
 \frac{1}{2} \left[ \frac{ \Delta p(\by) }{ p(\by) } - \frac{1}{2} \left\{ \frac{ \Vert \nabla p(\by) \Vert}{p(\by)} \right\}^2 \right] \, ,
$$
which reduces to
$$
\frac{1}{2} \left\{ Z \frac{  m_{p+4} ( Z ) }{m_{p} (Z)} - p g(Z) - \frac{Z}{2} g(Z)^2 \right\}
$$
for $ g(Z) = E( \kappa \mid Z )$.

Secondly, note that
\begin{equation}
Z \{ m_{p+2} (Z ) - m_{p+4} (Z ) \} =
  2 \int_0^1 \kappa^{\frac{p}{2}+1}  ( 1 - \kappa ) \p( \kappa ) d \left ( - e^{ - \frac{Z}{2} \kappa } \right ) \, .
\end{equation}
Therefore,
$$
Z \left \{ \frac{m_{p+2} (Z )}{m_p(Z)} - \frac{m_{p+4} (Z )}{m_p(Z)} \right \} =
\int_0^1 \left\{ ( p+2) ( 1 - \kappa ) - 2 \kappa + 2 \kappa ( 1 - \kappa ) \frac{ \p^{\prime} (\kappa) }{ \p(\kappa) }
\right \} \frac{\kappa^{ \frac{p}{2}} e^{- \frac{Z}{2} \kappa} \p( \kappa )}{m_p(Z)} \ \dd \kappa
$$
Then under the assumption that $ \lim_{ \kappa \rightarrow 0 , 1} \kappa ( 1 - \kappa ) \p( \kappa ) = 0 $, integration by parts gives (\ref{hb.risk.decomposition2}).  Hence
$$
\E( \Vert \hat{\bbeta} - \bbeta \Vert^2)  = p + 2 \E_{Z \mid \theta} \left[ (Z+4) g(Z) - (p+2)  - \frac{Z}{2} g(Z)^2  - 
  E_{ \kappa | Z } \left\{ 2 \kappa ( 1 - \kappa ) \frac{ \p^{\prime} ( \kappa ) }{ \p ( \kappa ) }
\right\} \right] \, .
$$
\end{proof}

\singlespace

\begin{small}

\bibliographystyle{abbrvnat}
\bibliography{masterbib}

\end{small}

\end{document}